\newtheorem{theorem}{Theorem}
\newtheorem{corollary}{Corollary}
\newtheorem{lemma}{Lemma}
\newtheorem{definition}{Definition}
\newtheorem{assumption}{Assumption}
\newtheorem{proposition}{Proposition}
\DeclareMathOperator*{\argmin}{arg\,min}
\DeclareMathOperator*{\argmax}{arg\,max}
\DeclareMathOperator*{\card}{card}
\DeclareMathOperator*{\supp}{supp}
\title{\LARGE \bf
A Robust Circle-criterion Observer-based Estimator for Discrete-time Nonlinear Systems in the Presence of Sensor Attacks and Measurement Noise
}
\author{Tianci Yang, Carlos Murguia, Margreta Kuijper, Dragan  Ne\v{s}i\'{c} 
\thanks{This work was supported by the Australian Research Council under the Discovery Project DP170104099.}
\thanks{The authors are with the Department of Electrical and Electronics Engineering, the University of Melbourne, Australia.
        {\tt\small tianciy@student.unimelb.edu.au}}
}
\begin{document}

\maketitle
\thispagestyle{empty}
\pagestyle{empty}

\begin{abstract}
	We address the problem of robust state estimation and attack isolation for a class of discrete-time nonlinear systems with positive-slope nonlinearities under (potentially unbounded) sensor attacks and measurement noise. We consider the case when a subset of sensors is subject to additive false data injection attacks. Using a bank of circle-criterion observers, each observer leading to an Input-to-State Stable (ISS) estimation error, we propose a estimator that provides robust estimates of the system state in spite of sensor attacks and measurement noise; and an algorithm for detecting and isolating sensor attacks. Our results make use of the ISS property of the observers to check whether the trajectories of observers are “consistent” with the attack-free trajectories of the system. Simulations results are presented to illustrate the performance of the results.
\end{abstract}

\section{Introduction}

Networked Control Systems (NCSs) have emerged as a technology that combines control, communication, and computation and offers the necessary flexibility to meet new demands in distributed and large scale systems. Recently, security of NCSs has become an important issue as wireless communication networks might serve as new access points for attackers to adversely affect the operation of the system dynamics. Cyber-physical attacks on NCSs have caused substantial damage to a number of physical processes. One of the most well-known examples is the attack on Maroochy Shire Council’s sewage control system in Queensland, Australia that happened in January 2000. The attacker hacked into the controllers that activate and deactivate valves and caused flooding of the grounds of a hotel, a park, and a river with a million liters of sewage. Another incident is the very recent SuxNet virus that targeted Siemens’ supervisory control and data acquisition systems which are used in many industrial processes. It follows that strategic mechanisms to identify and deal with attacks on NCSs are strongly needed.

In \cite{Fawzi2012}-\nocite{Massoumnia1986}\nocite{Pajic2014}\nocite{Mo2014}\nocite{Vamvoudakis2014}\nocite{Chong2016b}\nocite{Vamvoudakis2012}\nocite{Shoukry2014}\nocite{Yong}\nocite{Park2015}\nocite{Liu2009}\nocite{Teixeira2012b}\nocite{Murguia2016}\nocite{Dolk1}\nocite{Hashemil2017}\nocite{Pasqualetti123}\nocite{Murguia2017d}\nocite{Jairo}\nocite{Carlos_Justin2}\nocite{Sahand2017}\nocite{Carlos_Justin3}\cite{Carlos_Iman1}, a range of topics related to security of control systems have been discussed. In general, they provide analysis tools for quantifying the performance degradation induced by different classes of attacks; and propose reaction strategies to identify and counter their effect on the system dynamics. Most of the existing work, however, has considered control systems with linear dynamics, although
in most engineering applications the dynamics of the plants being monitored and controlled is highly nonlinear. There are some results addressing the nonlinear case though. In \cite{Kim2016}, exploiting sensor redundancy, the authors address the problem of sensor attack detection and state estimation for uniformly observable continuous-time nonlinear systems. Similarly, in \cite{tianci1}, the authors provide an algorithm for isolating sensor attacks for a class of discrete-time nonlinear systems with bounded measurement noise.

In this manuscript,
we consider the
case when the system has $p$ sensors, all of which are
subject to measurement noise and up to $q < p/2$ of them are
attacked. Following the results in \cite{Chong2015} for linear systems, using a bank of circle criterion observers \cite{Arcak}\nocite{Fan2003}\nocite{Ibrir2007}-\cite{Sundaram2016}, each observer leading to an ISS estimation error, we construct an estimator that provides robust estimates of the system state in spite of sensor attacks. In particular, the proposed estimator leads to estimation errors satisfying an ISS property with respect to measurement noise but independent of attack signals. Next, we propose an algorithm for detecting and isolating false data injection sensor attacks. Our results make use of the ISS property of the observers to check whether the trajectories of observers are “consistent” with the attack-free trajectories of the system. The main idea behind our results is the following. Each observer in the bank is driven by a different subset of sensors. Thus, without attacks, the observers produce ISS estimation errors with respect to measurement noise only. For every pair of observers in the bank, we compute the largest difference between their estimates. If a pair of observers is driven by a subset of attack-free sensors, then the largest difference between their estimates is also ISS with respect to measurement noise only. However, if there are attacks on some of the sensors, the observers driven by those sensors might produce larger differences than the attack-free ones. These ideas work well under the assumption that less than $p/2$ sensors are attacked, i.e, $q < p/2$. To design the observers in the bank, we give an extension to the result in \cite{Ibrir2007} for designing robust discrete-time circle-criterion observers. In particular, we use the incremental multiplier technique introduced in \cite{Sundaram2016} to cast the observer design as the solution of a semidefinite program. We minimize the ISS-gain from the measurement noise to the estimation error.

The paper is organized as follows. In Section \uppercase\expandafter{\romannumeral2}, we present preliminary results needed for the subsequent sections. In Section \uppercase\expandafter{\romannumeral3}, we provide tools for designing optimal robust circle criterion observers in the attack-free case. In Section \uppercase\expandafter{\romannumeral4},
assuming that a sufficiently small number of sensors are subject to attacks, we propose an estimation scheme using a bank of robust circle criterion observers. In Section \uppercase\expandafter{\romannumeral5}, an algorithm for isolating sensor attacks is given. Finally, in Section \uppercase\expandafter{\romannumeral6}, we give concluding remarks.
\section{Preliminaries}
\subsection{Notation}

We denote the set of real numbers by $\mathbb{R}$, the set of natural numbers by $\mathbb{N}$ , the set of integers by $\mathbb{Z}$, and $\mathbb{R}^{n\times m}$ the set of $n\times m$ matrices for any $m,n \in \mathbb{N}$. For any vector $v\in\mathbb{R}^{n}$,  we denote {$v_{J}$} the stacking of all $v_{i}$, $i\in J$ and $J\subset \left\lbrace 1,\cdots,n\right\rbrace$, $|v|=\sqrt{v^{\top} v}$ and $\supp(v)=\left\lbrace i\in\left\lbrace 1,\cdots,n\right\rbrace |v_{i}\neq0\right\rbrace $. For a sequence of vectors $\left\lbrace v(k)\right\rbrace _{k=0}^{\infty}$, we denote $v_{[0,k]}$ a sequence of vectors $v(i)$, $i=0,\cdots,k$,  $||v||_{\infty}\triangleq \sup_{k\geq 0}|v(k)|$ and $||v||_{T}\triangleq \sup_{0\leq k\leq T}|v(k)|$. We say a sequence $\left\lbrace v(k)\right\rbrace \in l_{\infty}$ if $||v||_{\infty}<\infty$. We denote the cardinality of a set $S$ as $\card(S)$. We denote matrix $P$ to be positive definite as $P>0$. The identity matrix is denoted by $I$. A function $\beta:\mathbb{R}_{\geq0}\times\mathbb{R}_{\geq 0}\to\mathbb{R}_{\geq0}$ is said to be of class $exp-KL$ if there exist $c>0$, $\lambda\in(0,1)$, such that $\beta(s,k)=c\lambda^{k}\cdot s$. The binomial coefficient is denoted as $\binom{a}{b}$, where $a,b$ are nonnegative integers. We denote a variable $m$ uniformly distributed in the interval $(a,b)$ as $m\sim\mathcal{U}(a,b)$. 
\subsection{Definitions and lemmas}

Several definitions and lemmas that are important in this paper are introduced here.

\begin{definition}\cite{Sundaram2016}\label{222}
	(Incremental Multiplier Matrices). Suppose $f: \mathbb{R}^{n_{q}}\to \mathbb{R}^{n_{f}}$. A symmetric matrix $M\in\mathbb{R}^{(n_{q}+n_{f})\times(n_{q}+n_{f})}$ is an incremental multiplier matrix $(\delta MM)$ for $f$ if the following incremental quadratic constraint $(\delta QC)$ is satisfied for all $q_{1},q_{2}\in\mathbb{R}^{n_{q}}$:
	\begin{eqnarray}
	\left[ \begin{matrix}
	\triangle q\\
	\triangle f
	\end{matrix}\right] ^{T}M	\left[ \begin{matrix}
	\triangle q\\
	\triangle f
	\end{matrix}\right]\geq 0\label{20},
	\end{eqnarray}
	where $\triangle q=q_{1}-q_{2}$ and $\triangle f=f(q_{1})-f(q_{2})$. 
\end{definition}

\begin{definition}
	Consider a discrete-time system
	\begin{equation}
	e^{+}=F(e,m)\label{1},
	\end{equation}
	with state $e\in\mathbb{R}^{n}$, the input $m\in\mathbb{R}^{p}$ with $\left\lbrace m(k)\right\rbrace \in l_{\infty}$. The system is said to be input-to-state stable (ISS) with a linear gain $\gamma$ and an $exp-KL$ function if there exist $c>0$, $\lambda\in(0,1)$, and $\gamma\geq 0$ such that the following condition is satisfied:
	\begin{equation}
	|e(k)|\leq c\lambda^{k}|e(0)|+\gamma||m||_{k}\label{92},
	\end{equation}
	for all $e(0)\in\mathbb{R}^{n}$, $k\geq 0$, and $\left\lbrace m(k)\right\rbrace\in l_{\infty}$. 
\end{definition}
The next lemma provides sufficient conditions for system (\ref{1}) to be ISS with a linear gain $\gamma$ and an $exp-KL$ function. It is a special case of the lemma in \cite{Sundaram2016}.


\begin{lemma}\label{888} \cite{Sundaram2016} Let $c_{1},c_{2}>0$, $c_{3}\in(0,1)$ and $\mu_{1}>0$. Suppose there exists $V: \mathbb{R}^{n}\to\mathbb{R}_{\geq 0} $ such that
		\begin{equation}
		c_{1}|e|^{2}\leq V(e)\leq c_{2}|e|^{2}\label{8},
		\end{equation}
		\begin{equation}
		V(F(e,m))-V(e)\leq -c_{3} V(e)+c_{3}\mu_{1}|m|^{2}\label{22},
		\end{equation}
		for all $k\geq 0$, $e\in \mathbb{R}^{n}$, $m\in\mathbb{R}^{p}$. Then the system (\ref{1}) is ISS with a linear gain and an $exp-KL$ function with respect to the bounded sequence $\left\lbrace m(k)\right\rbrace $, and
		\begin{equation}\label{c}
		|e(k)|\leq c\lambda^{k}|e(0)|+\gamma||m||_{k},
		\end{equation} 
		for all $k\geq 0$, $e(0)\in\mathbb{R}^{n}$, and $\left\lbrace m(k)\right\rbrace \in l_{\infty}$, where $c=\sqrt{\frac{c_{2}}{c_{1}}}$, $\lambda=\sqrt{1-c_{3}}$, and $\gamma=\sqrt{\frac{\mu_{1}}{c_{1}}}$.
\end{lemma}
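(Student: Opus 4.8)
The plan is to run the textbook Lyapunov-to-ISS comparison argument: iterate the one-step decrease inequality (\ref{22}) along trajectories of (\ref{1}) to get a closed-form bound on $V(e(k))$, then sandwich with the quadratic bounds (\ref{8}) and take square roots to recover (\ref{c}).

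First I would rewrite (\ref{22}) evaluated along a trajectory as $V(e(k+1))\leq(1-c_{3})V(e(k))+c_{3}\mu_{1}|m(k)|^{2}$ for all $k\geq 0$. Since $c_{3}\in(0,1)$, the contraction factor $1-c_{3}$ lies in $(0,1)$, so a straightforward induction on $k$ yields $V(e(k))\leq(1-c_{3})^{k}V(e(0))+c_{3}\mu_{1}\sum_{j=0}^{k-1}(1-c_{3})^{k-1-j}|m(j)|^{2}$ (the sum being empty for $k=0$).

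Next I would bound the convolution sum: replacing each $|m(j)|^{2}$ by $||m||_{k}^{2}$ (legitimate since $0\leq j\leq k-1\leq k$) and summing the finite geometric series via $\sum_{i=0}^{k-1}(1-c_{3})^{i}\leq\sum_{i=0}^{\infty}(1-c_{3})^{i}=1/c_{3}$, the factor $c_{3}$ cancels and one gets $V(e(k))\leq(1-c_{3})^{k}V(e(0))+\mu_{1}||m||_{k}^{2}$. Applying the lower bound of (\ref{8}) on the left and the upper bound on the right then gives $c_{1}|e(k)|^{2}\leq c_{2}(1-c_{3})^{k}|e(0)|^{2}+\mu_{1}||m||_{k}^{2}$, i.e. $|e(k)|^{2}\leq\frac{c_{2}}{c_{1}}(1-c_{3})^{k}|e(0)|^{2}+\frac{\mu_{1}}{c_{1}}||m||_{k}^{2}$.

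Finally I would take square roots and use subadditivity $\sqrt{a+b}\leq\sqrt{a}+\sqrt{b}$ for $a,b\geq 0$, together with $(1-c_{3})^{k/2}=\lambda^{k}$, to obtain $|e(k)|\leq\sqrt{\tfrac{c_{2}}{c_{1}}}\,\lambda^{k}|e(0)|+\sqrt{\tfrac{\mu_{1}}{c_{1}}}\,||m||_{k}$, which is exactly (\ref{c}) with $c=\sqrt{c_{2}/c_{1}}$, $\lambda=\sqrt{1-c_{3}}\in(0,1)$, and $\gamma=\sqrt{\mu_{1}/c_{1}}$; this also identifies the claimed $exp$-$KL$ function $\beta(s,k)=c\lambda^{k}s$. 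I do not expect a genuine obstacle here: the argument is routine. The only points needing a little care are the index bookkeeping in the geometric sum (ensuring the norm that appears is $||m||_{k}$, which is a valid overestimate of $||m||_{k-1}$) and correctly invoking subadditivity of the square root rather than equality.
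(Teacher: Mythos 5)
Your proof is correct: the iteration of the one-step decrease, the geometric-series bound that cancels the factor $c_{3}$, the sandwich via $c_{1}|e|^{2}\leq V(e)\leq c_{2}|e|^{2}$, and the subadditivity of the square root together give exactly the claimed constants $c=\sqrt{c_{2}/c_{1}}$, $\lambda=\sqrt{1-c_{3}}$, $\gamma=\sqrt{\mu_{1}/c_{1}}$. The paper itself gives no proof of this lemma (it is imported from the cited reference as a special case), and your argument is precisely the standard one that result rests on, so there is nothing to reconcile.
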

\section{A circle-criterion observer robust to measurement noise}

In \cite{Ibrir2007}, Ibrir designs a discrete-time nonlinear observers through circle criterion, but no disturbance in the system is considered. Our goal in this section is to give an extension of the result given in \cite{Ibrir2007} by taking measurement noise into consideration and propose a design method of a robust circle-criterion observer with respect to measurement noise in the absence of attack signals. The design method uses a similar idea as the one in \cite{Sundaram2016} by characterizing the nonlinearity with an incremental multiplier matrix, but we present an extension of the results given in \cite{Sundaram2016} by solving an optimization problem with more degrees of freedom, which leads to a less conservative ISS gain. More specifically, we show that in some circumstances our circle-criterion observer provides state estimates less sensitive to measurement noise (see Example 1). We consider a discrete-time nonlinear system formulated as:
\begin{eqnarray}
\begin{split}
x^{+}=&Ax+Gf(Hx)+\rho(u,y)\label{13},\\
y=&Cx+m,
\end{split}
\end{eqnarray}
where $x\in \mathbb{R}^{n}$ is the state, $y\in \mathbb{R}^{n_{y}}$ is the sensor measurements, $m\in \mathbb{R}^{n_{y}}$ is the measurement noise with $\left\lbrace m(k)\right\rbrace \in l_{\infty}$ and $G\in\mathbb{R}^{n\times r}$, $H\in\mathbb{R}^{r\times n}$.
The term $\rho(u,y)$ is a known arbitrary real-valued vector that depends on the system inputs and outputs.

The state-dependent nonlinearity $f(Hx)$ is an $r$-dimensional vector where each entry is a function of a linear combination of the states
\begin{equation}\label{230}
f_{i}=f_{i}\left( \sum_{j=1}^{n}H_{ij}x_{j}\right) ,\quad i=1,\cdots,r
\end{equation} 
where $H_{ij}$ are the entries of matrix $H$.
\vspace{2mm}
\begin{assumption}\label{1000} For any $i\in\left\lbrace 1,\cdots,r\right\rbrace $, the following holds,
	\begin{equation}
	\frac{f_{i}(v_{i})-f_{i}(w_{i})}{v_{i}-w_{i}}\geq 0,\forall v_{i},w_{i}\in\mathbb{R}\quad with\quad v_{i}\neq w_{i}\label{14}
	\end{equation}
\end{assumption}
\vspace{2mm}
We consider a circle-criterion observer with the following structure:
\begin{equation}
\hat{x}^{+}=A\hat{x}+Gf(H\hat{x}+K(C\hat{x}-y))+L(C\hat{x}-y)+\rho(u,y)\label{97}
\end{equation}
where $\hat{x}$ denotes the estimate of the state $x$, and $K\in\mathbb{R}^{r\times n_{y}}$, and $L\in\mathbb{R}^{n\times n_{y}}$ are the observer gains to be designed. Then the error $e=\hat{x}-x$ has the following dynamics
\begin{equation}
e^{+}=(A+LC)e-Lm+G\triangle f\label{5},
\end{equation}
where
\begin{equation}
\triangle f=f(\hat{q})-f(\tilde{q})\label{9},
\end{equation}
where $\tilde{q}=Hx$ and $\hat{q}=H\hat{x}+K(\hat{y}-y)$ and $\hat{y}=C\hat{x}$
\begin{equation}
\triangle q=\hat{q}-\tilde{q}=(H+KC)e-Km\label{19}.
\end{equation}

Our objective is to design the gains $K$ and $L$, such that a quadratic Lyapunov function $V(e)$ satisfies (\ref{8}) and (\ref{22}). Then we can show that the error dynamics of the observer is ISS with a linear gain and an $exp-KL$ function with respect to the bounded measurement noise and (\ref{c}) holds.


\begin{proposition}\label{234} Consider system (\ref{13}), for given $c_{3}\in(0,1)$, suppose there exist matrix $P\in\mathbb{R}^{n\times n}$ and $P>0$, $K\in\mathbb{R}^{r\times n_{y}}$ and $Y\in\mathbb{R}^{n\times n_{y}}$, an incremental multiplier matrix $M$ for the nonlinearity $f$, and scalars $\mu>0$ and $\mu_{1}>0$ that satisfy the matrix inequalities:
	\begin{eqnarray}
	\begin{split}
	\left[ \begin{matrix}
	-P&\star\\
	\Xi_{21}&\Xi_{22}
	\end{matrix}\right]+\left[\begin{matrix}
	0&0\\
	0&\Gamma^{T}M\Gamma
	\end{matrix} \right] \leq& 0\label{29},\\
	\left[ \begin{matrix}P&I\\
	I&\mu I\end{matrix}\right] \geq& 0,
	\end{split}
	\end{eqnarray} 
	where 
	\begin{eqnarray}
	\begin{split}
		\Xi_{21}^{T}=&\left[ \begin{matrix}
	PA+YC&-Y&PG
	\end{matrix}\right] \label{225},\\
	\Xi_{22}=&\left[ \begin{matrix}
	(c_{3}-1)P&0&0\\0&-c_{3}\mu_{1}I&0\\
	0&0&0
	\end{matrix}\right],
	\end{split}
	\end{eqnarray}
	and
	\begin{equation}
	\Gamma=\left[ \begin{matrix}
	H+KC&-K&0\\
	0&0&I
	\end{matrix}\right] \label{39},
	\end{equation}
	then the observer (\ref{97}) characterized by gains $L=P^{-1}Y$ and $K$ has ISS error dynamics with a linear gain $\gamma=\sqrt{\mu\mu_{1}}$ and an $exp-KL$ function with respect to $m$.\label{208}
\end{proposition}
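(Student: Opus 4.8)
The plan is to invoke Lemma~\ref{888} with the quadratic candidate $V(e)=e^{\top}Pe$. First I would dispatch the sandwich bound (\ref{8}): since $P>0$ we may take $c_{1}=\lambda_{\min}(P)>0$ and $c_{2}=\lambda_{\max}(P)>0$, and a Schur complement applied to the second matrix inequality in (\ref{29}) (valid because $\mu>0$) yields $P\geq\mu^{-1}I$, hence $c_{1}^{-1}\leq\mu$. This last bound is exactly what turns the gain $\sqrt{\mu_{1}/c_{1}}$ produced by Lemma~\ref{888} into the advertised $\gamma=\sqrt{\mu\mu_{1}}$.

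The core of the argument is the dissipation inequality (\ref{22}). I would stack the error, the noise and the nonlinearity increment into the augmented vector $\zeta=[\,e^{\top}\ m^{\top}\ \triangle f^{\top}\,]^{\top}$. By (\ref{5}) and (\ref{19}), $e^{+}=\mathcal{A}\zeta$ with $\mathcal{A}=[\,A+LC\ \ -L\ \ G\,]$, and $[\,\triangle q^{\top}\ \triangle f^{\top}\,]^{\top}=\Gamma\zeta$ with $\Gamma$ as in (\ref{39}). Since $\Xi_{22}$ is precisely the block-diagonal matrix $\mathrm{diag}((c_{3}-1)P,\,-c_{3}\mu_{1}I,\,0)$, this gives the identity $V(e^{+})-(1-c_{3})V(e)-c_{3}\mu_{1}|m|^{2}=\zeta^{\top}(\mathcal{A}^{\top}P\mathcal{A}+\Xi_{22})\zeta$. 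Next I would bring in the nonlinearity via the incremental multiplier: because $M$ is an incremental multiplier matrix for $f$ and $(\hat{q},\tilde{q})$ are admissible arguments, Definition~\ref{222} gives $\zeta^{\top}\Gamma^{\top}M\Gamma\zeta\geq 0$. Hence it suffices to establish the matrix inequality $\mathcal{A}^{\top}P\mathcal{A}+\Xi_{22}+\Gamma^{\top}M\Gamma\leq 0$; left/right multiplying by $\zeta$ and discarding the nonnegative $\delta QC$ term then yields $V(e^{+})-(1-c_{3})V(e)-c_{3}\mu_{1}|m|^{2}\leq 0$, i.e. (\ref{22}).

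To obtain that matrix inequality I would substitute the change of variables $Y=PL$ into $\Xi_{21}$, so that $\Xi_{21}^{\top}=[\,PA+YC\ \ -Y\ \ PG\,]=P\mathcal{A}$, and then apply a Schur complement with respect to the negative definite block $-P$ in the first inequality of (\ref{29}); using $P^{\top}=P$ one checks $\Xi_{21}P^{-1}\Xi_{21}^{\top}=\mathcal{A}^{\top}P\mathcal{A}$, so the Schur complement is exactly equivalent to $\mathcal{A}^{\top}P\mathcal{A}+\Xi_{22}+\Gamma^{\top}M\Gamma\leq 0$. Finally, Lemma~\ref{888} delivers (\ref{c}) with $c=\sqrt{c_{2}/c_{1}}$, $\lambda=\sqrt{1-c_{3}}$ and linear gain $\sqrt{\mu_{1}/c_{1}}\leq\sqrt{\mu\mu_{1}}$, so the observer (\ref{97}) with gains $L=P^{-1}Y$ and $K$ has ISS error dynamics with linear gain $\gamma=\sqrt{\mu\mu_{1}}$ and an $exp-KL$ function with respect to $m$.

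I expect the only delicate points to be mechanical bookkeeping: keeping the block dimensions of $\zeta$, $\mathcal{A}$, $\Gamma$ and $\Xi_{22}$ mutually consistent, and verifying that the Schur complement of (\ref{29}) reproduces exactly the term $\mathcal{A}^{\top}P\mathcal{A}$ after the substitution $Y=PL$. The conceptual ingredient, namely using the $\delta QC$ to absorb the unknown increment $\triangle f$ through an S-procedure step, is routine once the augmented-vector formulation is in place.
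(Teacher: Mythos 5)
Your proposal is correct and follows essentially the same route as the paper: the paper's own ``proof'' merely cites Theorem~1 of \cite{Sundaram2016} with the substitutions $H=I$, $B=0$, $D=I$, $D_{q}=0$ and the extra variable $\mu_{1}$, and what you have written out --- the quadratic $V(e)=e^{\top}Pe$, the augmented vector $\zeta$, the Schur complement of (\ref{29}) giving $\mathcal{A}^{\top}P\mathcal{A}+\Xi_{22}+\Gamma^{\top}M\Gamma\leq 0$, absorbing $\triangle f$ via the $\delta QC$, and extracting $c_{1}^{-1}\leq\mu$ from the second LMI to get $\gamma=\sqrt{\mu_{1}/c_{1}}\leq\sqrt{\mu\mu_{1}}$ --- is precisely the argument that citation encapsulates. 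The only point worth flagging is that $F(e,m)$ in (\ref{5}) also depends on the trajectory $x(k)$ through $\tilde{q}=Hx$, but since the incremental quadratic constraint holds for all pairs $(\hat{q},\tilde{q})$, the dissipation inequality (\ref{22}) holds uniformly and Lemma~\ref{888} applies as you claim.
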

\begin{proof}
The proof of Proposition \ref{234} can be obtained from the proof of Theorem 1 in \cite{Sundaram2016} by letting $H=I$, $B=0$, $D=I$, $D_{q}=0$ and adding a new variable $\mu_{1}$ in $\Xi_{22}$.
\end{proof}

From Proposition \ref{208} , we see that if we could solve (\ref{29}) while minimizing $\sqrt{\mu\mu_{1}}$, then the designed observer is robust to measurement noise. We take advantage of the results in \cite{Sundaram2016} by using an incremental multiplier matrix to characterize the nonlinearity $f$ in the design of a robust circle-criterion observer, but we do not fix $\mu_{1}=1$ as a constant as \cite{Sundaram2016} does. Hence, our observer could provide estimates more robust to measurement noise in some circumstances.

From (\ref{14}), we have
\begin{equation}
(\hat{q}-\tilde{q})^{\top}\left( f\left( \hat{q})-f(\tilde{q}\right) \right) \geq 0.
\end{equation} 
Recalling (\ref{9}) and (\ref{19}), we know $\triangle q^{\top}\triangle f\geq 0$ $\forall\tilde{q}\in\mathbb{R}^{r}$ and $\forall\hat{q}\in\mathbb{R}^{r}$.  Hence, any matrix
\begin{equation}
M=\kappa\left[ \begin{matrix}
0&1\\
1&0
\end{matrix}\right]\label{72},
\end{equation}
with $\kappa>0$ is an incremental multiplier matrix for $f$. The following linear matrix inequality is equivalent to (\ref{29}). 

\begin{lemma}\cite{Sundaram2016} For some matrix $Y_{2}\in\mathbb{R}^{r\times n_{y}}$, consider the linear matrix inequality
	\begin{eqnarray}
	\left[ \begin{matrix}
	-P&\star\\
	\Xi_{21}&\Xi_{22}
	\end{matrix}\right]+\left[\begin{matrix}
	0&0\\
	0&\Gamma_{1}^{T}M\Gamma_{1}+\Gamma_{1}^{T}\Gamma_{2}+\Gamma_{2}^{T}\Gamma_{1}
	\end{matrix} \right] \leq 0\label{44},
	\end{eqnarray} 
	where $\Xi_{21}$, $\Xi_{22}$ are described in (\ref{225}), and 
	$
	\Gamma_{1}=\left[ \begin{matrix}H&0&0\label{81}\\
	0&0&I\end{matrix}\right],
	\Gamma_{2}=\left[ \begin{matrix}0&0&0\\
	Y_{2}C&-Y_{2}&0\end{matrix}\right]\label{82}
	$, then with
	\begin{equation}
L=P^{-1}Y,K=\frac{Y_{2}}{\kappa}\label{48}
\end{equation}
	(\ref{44}) and (\ref{29}) are equivalent.\label{209}
\end{lemma}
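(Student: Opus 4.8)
The plan is to prove the equivalence by a direct algebraic comparison of the two matrix inequalities, since the only block in which (\ref{29}) and (\ref{44}) differ is the lower‑right one: the $-P$, $\Xi_{21}$ and $\Xi_{22}$ blocks are literally the same in both and involve neither $K$ nor $Y_{2}$, and the claimed gain $L=P^{-1}Y$ is common to both statements. So the whole lemma reduces to showing that, under the substitution $Y_{2}=\kappa K$,
\begin{equation*}
\Gamma^{T}M\Gamma=\Gamma_{1}^{T}M\Gamma_{1}+\Gamma_{1}^{T}\Gamma_{2}+\Gamma_{2}^{T}\Gamma_{1},
\end{equation*}
with $M$ the off‑diagonal matrix in (\ref{72}).

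First I would exploit the structure of $M$. Partitioning the rows of $\Gamma$ as $\Gamma=[\Gamma_{a}^{T}\ \Gamma_{b}^{T}]^{T}$, with $\Gamma_{a}=[\,H+KC\ \ -K\ \ 0\,]$ and $\Gamma_{b}=[\,0\ \ 0\ \ I\,]$ (column widths $n,n_{y},r$ and both row blocks of height $r$), the purely off‑diagonal form of $M$ in (\ref{72}) gives at once $\Gamma^{T}M\Gamma=\kappa(\Gamma_{a}^{T}\Gamma_{b}+\Gamma_{b}^{T}\Gamma_{a})$. The point worth flagging is that this expression is bilinear in $(\kappa,K)$ and contains no term quadratic in $K$; this is exactly why the substitution $Y_{2}=\kappa K$ linearizes (\ref{29}).

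Next I would run the same reduction on the right‑hand side. Writing $\Gamma_{1}=[\Gamma_{1a}^{T}\ \Gamma_{1b}^{T}]^{T}$ with $\Gamma_{1a}=[\,H\ \ 0\ \ 0\,]$, $\Gamma_{1b}=[\,0\ \ 0\ \ I\,]$, and noting that the top block row of $\Gamma_{2}$ is zero, $\Gamma_{2}=[0\ \ \Gamma_{2b}^{T}]^{T}$ with $\Gamma_{2b}=[\,Y_{2}C\ \ -Y_{2}\ \ 0\,]$, the off‑diagonal form of $M$ together with the vanishing top block collapses the right‑hand side to $\Gamma_{1b}^{T}W+W^{T}\Gamma_{1b}$, where $W:=\kappa\Gamma_{1a}+\Gamma_{2b}$. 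Substituting $Y_{2}=\kappa K$ gives $W=[\,\kappa H+\kappa KC\ \ -\kappa K\ \ 0\,]=\kappa\Gamma_{a}$ and $\Gamma_{1b}=\Gamma_{b}$, so the right‑hand side equals $\kappa(\Gamma_{b}^{T}\Gamma_{a}+\Gamma_{a}^{T}\Gamma_{b})=\Gamma^{T}M\Gamma$, which closes the matrix identity.

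To finish, I would note that for fixed $\kappa>0$ the map $K\mapsto Y_{2}=\kappa K$ is a bijection, so a feasible tuple $(P,Y,K,\kappa,\mu,\mu_{1})$ of (\ref{29}) corresponds to a feasible tuple $(P,Y,Y_{2},\kappa,\mu,\mu_{1})$ of (\ref{44}) with $Y_{2}=\kappa K$, and conversely; in either direction the observer is reconstructed via $L=P^{-1}Y$, $K=Y_{2}/\kappa$, that is, (\ref{48}). I do not expect a real obstacle here: the only thing needing care is the block‑dimension bookkeeping (the partition $n,n_{y},r$ and the two height‑$r$ row blocks) so that all the products are conformable, and once that is set the argument is just the short computation above — the substance of the lemma is precisely the observation that the bilinear coupling $\Gamma^{T}M\Gamma$ becomes affine after the change of variable $Y_{2}=\kappa K$.
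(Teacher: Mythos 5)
Your proposal is correct and follows essentially the same route as the paper: both reduce the lemma to the single matrix identity $\Gamma^{T}M\Gamma=\Gamma_{1}^{T}M\Gamma_{1}+\Gamma_{1}^{T}\Gamma_{2}+\Gamma_{2}^{T}\Gamma_{1}$ under $Y_{2}=\kappa K$, and both cancellations rest on the same two facts (the off-diagonal form of $M$ and the fact that all $K$-dependence sits in one block row, so no term quadratic in $K$ survives). The paper organizes the computation via the additive split $\Gamma=\Gamma_{1}+\tilde{\Gamma}_{2}$ with $M\tilde{\Gamma}_{2}=\Gamma_{2}$ and $\tilde{\Gamma}_{2}^{T}M\tilde{\Gamma}_{2}=0$, whereas you partition by block rows, but this is only a cosmetic difference.
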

\begin{proof} Recalling (\ref{39}), we let $\Gamma=\Gamma_{1}+\tilde{\Gamma}_{2}$
	where $\tilde{\Gamma}_{2}=\left[ \begin{matrix}
	KC&-K&0\\
	0&0&0
	\end{matrix}\right] $. Note that $M\tilde{\Gamma}_{2}=\left[ \begin{matrix}
	0&0&0\\
	\kappa KC&\-\kappa K&0\end{matrix}\right] 
	$, with $K$ given by (\ref{48}), we see that $\kappa K=Y_{2}$. Therefore, $M\tilde{\Gamma}_{2}=\left[ \begin{matrix}
	0&0&0\\
	Y_{2}C&\-Y_{2}&0\end{matrix}\right]=\Gamma_{2} 
	$. As we have $\tilde{\Gamma}_{2}^{\top}M\tilde{\Gamma}_{2}=0$, thus
	\begin{eqnarray}
	\begin{split}
		\Gamma^{\top}M\Gamma=&(\Gamma_{1}+\tilde{\Gamma}_{2})^{\top}M(\Gamma_{1}+\tilde{\Gamma}_{2})\\
	=&\Gamma_{1}^{\top}M\Gamma_{1}+\Gamma_{1}^{\top}M\tilde{\Gamma}_{2}+\tilde{\Gamma}_{2}^{\top}M\Gamma_{1}+\tilde{\Gamma}_{2}^{\top}M\tilde{\Gamma}_{2}\\
	=&\Gamma_{1}^{\top}M\Gamma_{1}+\Gamma_{1}^{\top}\Gamma_{2}+\Gamma_{2}^{\top}\Gamma_{1}
	\end{split}
	\end{eqnarray}
	which implies that (\ref{44}) and (\ref{29}) are equivalent. The proof is complete.
\end{proof}
By replacing (\ref{29}) with (\ref{44}) in Proposition \ref{234}, we obtain the following result.

\begin{theorem}\label{240} Consider the system (\ref{13}), for given $c_{3}\in(0,1)$, suppose there exist matrix $P\in\mathbb{R}^{n\times n}$ and $P>0$, matrix $Y\in\mathbb{R}^{n\times n_{y}}$, matrix $Y_{2}\in\mathbb{R}^{r\times n_{y}}$, scalars $\kappa>0$, $\mu>0$, $\mu_{1}>0$ that satisfy the linear matrix inequalities
	\begin{eqnarray}
	\begin{split}
	\left[ \begin{matrix}
	-P&\star\\
	\Xi_{21}&\Xi_{22}
	\end{matrix}\right]+\left[\begin{matrix}
	0&0\\
	0&\Gamma_{1}^{T}M\Gamma_{1}+\Gamma_{1}^{T}\Gamma_{2}+\Gamma_{2}^{T}\Gamma_{1}
	\end{matrix} \right] \leq& 0,\label{68}\\
	\left[ \begin{matrix}P&I\\
	I&\mu I\end{matrix}\right] \geq& 0,
	\end{split}
	\end{eqnarray} 
	where $\Xi_{21}$, $\Xi_{22}$ are described in (\ref{225}), $M$ is given by  (\ref{72}), and $
	\Gamma_{1}=\left[ \begin{matrix}H&0&0\\
	0&0&I\end{matrix}\right],
	\Gamma_{2}=\left[ \begin{matrix}0&0&0\\
	Y_{2}C&-Y_{2}&0\end{matrix}\right]\label{84}
	$, then the observer (\ref{97}) characterized by gains given in (\ref{48})
has ISS error dynamics with a linear gain $\gamma=\sqrt{\mu\mu_{1}}$ and an $exp-KL$ function, which means there exist $c>0$, $\lambda\in(0,1)$ such that 
	\begin{equation}
	|e(k)|\leq c\lambda^{k}|e(0)|+\gamma||m||_{k},\label{94}
	\end{equation}
	for all $e(0)\in\mathbb{R}^{n}$, $k\geq 0$ and $m\in\mathbb{R}^{n_{y}}$ with $\left\lbrace m(k)\right\rbrace \in\l_{\infty}$.
\end{theorem}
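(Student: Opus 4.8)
The plan is to derive Theorem~\ref{240} directly from Proposition~\ref{234} and Lemma~\ref{209}; the work is only to check that the hypotheses of the former are met. First I would note that, by Assumption~\ref{1000}, the sector inequality $\triangle q^{\top}\triangle f\geq 0$ holds for all $\tilde{q},\hat{q}\in\mathbb{R}^{r}$, so the matrix $M$ in (\ref{72}) with $\kappa>0$ satisfies the incremental quadratic constraint (\ref{20}); hence, by Definition~\ref{222}, $M$ is an incremental multiplier matrix for $f$ and is therefore an admissible choice for the multiplier appearing in Proposition~\ref{234}.

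Next I would invoke Lemma~\ref{209} with this particular $M$: for the gains defined by (\ref{48}), i.e. $L=P^{-1}Y$ and $K=Y_{2}/\kappa$, the linear matrix inequality (\ref{44}) — which is precisely the first inequality in (\ref{68}) — is equivalent to (\ref{29}). Since the second inequality in (\ref{68}), namely $\left[\begin{matrix}P&I\\ I&\mu I\end{matrix}\right]\geq 0$, is identical to the positivity condition in Proposition~\ref{234}, feasibility of (\ref{68}) is equivalent to feasibility of the pair (\ref{29}) together with that positivity condition, now with $K=Y_{2}/\kappa$. Therefore the full hypothesis list of Proposition~\ref{234} — $P>0$, the gains $K,Y$, the incremental multiplier matrix $M$, and scalars $\mu,\mu_{1}>0$ satisfying (\ref{29}) and the positivity LMI — holds.

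Applying Proposition~\ref{234} then yields immediately that the observer (\ref{97}) with gains (\ref{48}) has ISS error dynamics with linear gain $\gamma=\sqrt{\mu\mu_{1}}$ and an $exp-KL$ function, which is exactly the estimate (\ref{94}). To exhibit explicit $c$ and $\lambda$ I would unwind the argument of Proposition~\ref{234} through Lemma~\ref{888} applied to the quadratic storage function $V(e)=e^{\top}Pe$: the sandwich bound (\ref{8}) holds with $c_{1}=\lambda_{\min}(P)$ and $c_{2}=\lambda_{\max}(P)$, the decrease inequality (\ref{22}) is what the LMI (\ref{29}) encodes after the S-procedure step and substitution of the error dynamics (\ref{5}) and (\ref{19}), so Lemma~\ref{888} gives $c=\sqrt{\lambda_{\max}(P)/\lambda_{\min}(P)}$, $\lambda=\sqrt{1-c_{3}}\in(0,1)$, and gain $\sqrt{\mu_{1}/\lambda_{\min}(P)}$; a Schur complement on the positivity LMI gives $P\geq\mu^{-1}I$, hence $\lambda_{\min}(P)\geq\mu^{-1}$ and $\sqrt{\mu_{1}/\lambda_{\min}(P)}\leq\sqrt{\mu\mu_{1}}=\gamma$, recovering the stated gain.

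I do not expect a genuine obstacle, since the result is essentially a repackaging of the two preceding results. The only point requiring a little care is the algebraic bookkeeping inside Lemma~\ref{209} — writing $\Gamma=\Gamma_{1}+\tilde{\Gamma}_{2}$, checking $M\tilde{\Gamma}_{2}=\Gamma_{2}$ under $\kappa K=Y_{2}$ and $\tilde{\Gamma}_{2}^{\top}M\tilde{\Gamma}_{2}=0$, so that $\Gamma^{\top}M\Gamma=\Gamma_{1}^{\top}M\Gamma_{1}+\Gamma_{1}^{\top}\Gamma_{2}+\Gamma_{2}^{\top}\Gamma_{1}$, which is what makes (\ref{44}) jointly linear in $P,Y,Y_{2},\kappa,\mu,\mu_{1}$; but this is already established in Lemma~\ref{209} and can simply be cited.
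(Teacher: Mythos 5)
Your proposal is correct and follows the paper's own route exactly: the paper obtains Theorem~\ref{240} precisely by substituting the equivalent LMI (\ref{44}) from Lemma~\ref{209} into Proposition~\ref{234}, with the multiplier $M$ of (\ref{72}) justified by Assumption~\ref{1000}. Your additional unwinding through Lemma~\ref{888} and the Schur complement $P\geq\mu^{-1}I$ to recover $\gamma=\sqrt{\mu\mu_{1}}$ is consistent with, and slightly more explicit than, what the paper records.
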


\begin{corollary}
	A circle-criterion observer robust to measurement noise can be obtained by solving (\ref{68}) while minimizing $\mu+\mu_{1}$.
\end{corollary}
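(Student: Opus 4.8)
The plan is to combine the ISS certificate of Theorem~\ref{240} with an elementary convexity observation, so that the corollary follows as an optimization-theoretic remark. First I would recall that, by Theorem~\ref{240}, every feasible tuple $(P,Y,Y_2,\kappa,\mu,\mu_1)$ of the constraints~(\ref{68}) yields, through the gain formulas~(\ref{48}), an observer~(\ref{97}) whose estimation error satisfies the ISS bound~(\ref{94}) with linear gain $\gamma=\sqrt{\mu\mu_1}$. Since ``robustness to measurement noise'' means precisely that the coefficient $\gamma$ multiplying $\|m\|_k$ in~(\ref{94}) should be as small as possible, the design task is to minimize $\sqrt{\mu\mu_1}$ over the feasible set of~(\ref{68}).

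Next I would check that~(\ref{68}) defines a convex (LMI) feasible set in the decision variables $P,Y,Y_2,\kappa,\mu,\mu_1$: the blocks $\Xi_{21},\Xi_{22}$ of~(\ref{225}) are affine in $P$, $Y$ and $\mu_1$; the term $\Gamma_1^{T}M\Gamma_1$ is linear in $\kappa$ because $M$ in~(\ref{72}) is linear in $\kappa$; the cross terms $\Gamma_1^{T}\Gamma_2+\Gamma_2^{T}\Gamma_1$ are linear in $Y_2$; and the second inequality in~(\ref{68}) is affine in $P$ and $\mu$. The objective $\mu\mu_1$ is, however, bilinear and hence nonconvex, so minimizing $\gamma^{2}$ directly is not a semidefinite program. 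I would therefore replace it by the linear functional $\mu+\mu_1$, which turns the design into a standard SDP solvable by off-the-shelf solvers, and then invoke the arithmetic--geometric mean inequality $\gamma=\sqrt{\mu\mu_1}\le\tfrac12(\mu+\mu_1)$ to conclude that minimizing $\mu+\mu_1$ minimizes an upper bound on the ISS gain; the observer built from the optimal $(P,Y,Y_2,\kappa)$ via~(\ref{48}) is then the claimed noise-robust circle-criterion observer.

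The step I expect to require the most care — and the only one worth spelling out — is this passage from the true objective $\sqrt{\mu\mu_1}$ to the convex surrogate $\mu+\mu_1$. One should make explicit that the reformulation is what buys tractability, trading the exact gain for a bound that is tight when $\mu=\mu_1$, and emphasize that keeping $\mu_1$ free (rather than fixing $\mu_1=1$ as in~\cite{Sundaram2016}) enlarges the search space and can therefore only decrease the attained bound — which is exactly what Example~1 illustrates.
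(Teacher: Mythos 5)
Your proposal is correct and follows essentially the same route as the paper: invoke Theorem~\ref{240} to identify the ISS gain $\gamma=\sqrt{\mu\mu_{1}}$, observe that this objective is nonconvex, and replace it by the convex surrogate $\mu+\mu_{1}$ justified by the arithmetic--geometric mean inequality $(\mu+\mu_{1})^{2}\geq 4\mu\mu_{1}$, i.e. $\gamma\leq\tfrac12(\mu+\mu_{1})$. Your additional remarks on the affineness of the constraints in $(P,Y,Y_{2},\kappa,\mu,\mu_{1})$ and on the benefit of keeping $\mu_{1}$ free are consistent with, and slightly more explicit than, the paper's discussion.
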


Theorem \ref{240} provides a way to design the observer for (\ref{13}). If we solve (\ref{68}) while minimizing $\sqrt{\mu\mu_{1}}$, we obtain an observer robust to measurement noise. To make the objective function convex, we consider using $\mu+\mu_{1}$ instead as our objective function. We know that $
(\mu+\mu_{1})^{2}\geq 4\cdot\mu\mu_{1}$, which yields
$\gamma=\sqrt{\mu\mu_{1}}\leq \frac{1}{2}(\mu+\mu_{1})$ as $\mu,\mu_{1}$ are positive. Therefore, we can minimize the upper bounds of $\gamma$ by minimizing $\mu+\mu_{1}$. By solving (\ref{68}) while minimizing $\mu+\mu_{1}$, we can obtain an observer that {attenuates} measurement noise. Since $c_{3}\in(0,1)$ is in a bounded set, we do a grid-search over $c_{3}$, i.e. we make a grid in $(0,1)$  and for each grid point we solve (\ref{68}) while minimizing $\mu+\mu_{1}$ and then we choose the $c_{3}$ that minimizes $\sqrt{\mu\mu_{1}}$. In our design method, besides regarding $\mu_{1}$ as a variable, we also do not assume $c_{3}$ is a fixed constant with a given value as \cite{Sundaram2016} does, which makes our LMIs less conservative than those in \cite{Sundaram2016} and further improves the robustness of our observer to measurement noise in some circumstances. We use the model used in Example 1 in \cite{Ibrir2007} and compare their performance by introducing measurement noise $m$. All LMIs were solved using PENLAB \cite{Fiala} in MATLAB.

\textbf{Example 1} Consider the discrete-time nonlinear system subject to measurement noise:
\begin{eqnarray}
\begin{split}
x^{+}=&\left[ \begin{matrix}
1&\delta\\
0&1
\end{matrix}\right]x+\left[ \begin{matrix}
\frac{1}{2}\delta\alpha \sin (x_{1}+x_{2})\\
\delta\alpha \sin (x_{1}+x_{2})
\end{matrix}\right]
+\left[ \begin{matrix}
\delta u\\
\delta u
\end{matrix}\right]\label{d}, \\
y=&\left[ \begin{matrix}
3&0.3\\
3&0.6\\
6&0.9\\
1.2&12
\end{matrix}\right] x+m.
\end{split}
\end{eqnarray}
We let $\delta=0.1$ and $\alpha=1$. (\ref{d}) can be rewritten in the form of (\ref{13}) with Assumption \ref{1000} holding, see \cite{Ibrir2007} for more details. We solve (\ref{68}) while minimizing $\mu+\mu_{1}$ and doing a grid search over $c_{3}$ to obtain observer matrices $K$, $L$, $c_{3}=0.900$, and $\gamma=0.924$. We obtain $K$, $L$ via solving the LMIs in \cite{Ibrir2007}. We solve the LMIs in \cite{Sundaram2016} by letting $c_{3}=0.500$, $M$ given as (\ref{72}), $H=I$ and minimizing $\mu$ to obtain observer with $\gamma=22.4$. We let $m\sim\mathcal{U}(-0.5,0.5)$. $x(0)$ is randomly selected from a normal distribution and $\hat{x}(0)=0 $. The performance of these observers are compared respectively in Figures \ref{fig:fig1}-\ref{fig:fig2}. 
\begin{figure}[h]
	\includegraphics[width=0.5\textwidth]{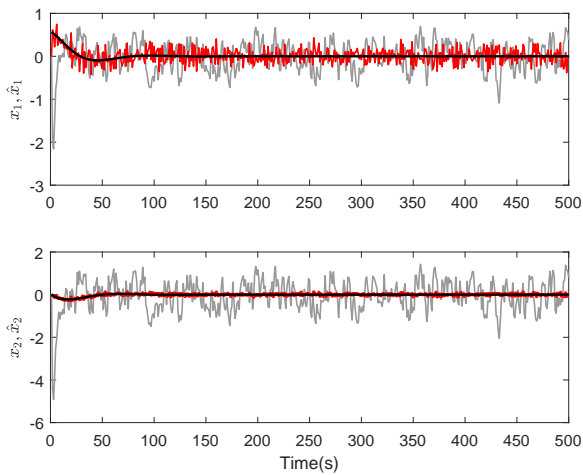}
	\caption{Estimated states $\hat{x}$ converges to a neighbourhood of the true states $x$. Legend: Observer obtained via Theorem 1 (red), Observer from \cite{Sundaram2016} (grey), true states (black)}
	\label{fig:fig1}
	\centering
\end{figure}
\begin{figure}[h]
	\includegraphics[width=0.5\textwidth]{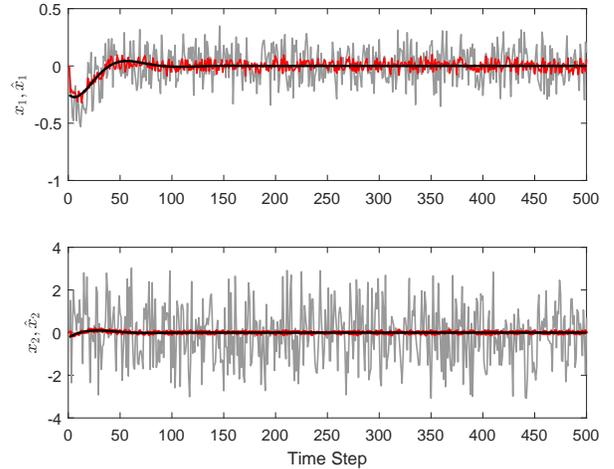}
	\caption{Estimated states $\hat{x}$ converges to a neighbourhood of the true states $x$. Legend: Observer obtained via Theorem 1 (red), Observer from \cite{Ibrir2007} (grey), true states (black)}
	\label{fig:fig2}
	\centering
\end{figure}
\section{A circle-criterion observer-based estimator robust to measurement noise and sensor attacks}

In this section, we introduce a circle-criterion observer-based estimator for the same class of discrete-time nonlinear systems as in Section \uppercase\expandafter{\romannumeral3}, but we assume a small number of sensors are also subject to sensor attacks:
\begin{eqnarray}
\begin{split}
x^{+}=&Ax+Gf(Hx)+\rho(u,y)\label{3},\\
\tilde{y}=&\tilde{C}x+a+\tilde{m},
\end{split}
\end{eqnarray}
where $x\in\mathbb{R}^{n}$ is the state, $\tilde{y}\in\mathbb{R}^{p}$ is the sensor measurement, $\tilde{m}\in\mathbb{R}^{p}$ is the measurement noise with $\left\lbrace \tilde{m}(k)\right\rbrace \in l_{\infty}$, $a\in\mathbb{R}^{p}$ is the vector of attacks: if sensor $i\in\left\lbrace 1,\cdots,p\right\rbrace $ is not attacked, then the $ith$ component of the vector $a(k)$, $a_{i}(k)=0, \forall k\geq 0$; otherwise sensor $i$ is attacked and { $a_{i}(k)$ is arbitrary and possibly unbounded}. We denote $W\subseteq\left\lbrace 1,\cdots, p\right\rbrace $ the set of attacked sensors, then we have $\supp(a) = W$. We assume the set $W$ is unknown to us. We denote $\tilde{y}\left( k;x(0),a_{[0,k]},\tilde{m}_{[0,k]}\right)$ as the output of the system at time $k$ when the initial state is $x(0)$ and the outputs are subject to measurement noise $\tilde{m}$ and sensor attacks $a$. (\ref{230}), (\ref{14}) still hold.

Suppose some of the sensors are corrupted by attack signals. Having received the measured output sequences $\left\lbrace \tilde{y}(k;x(0),a_{[0,k]},\tilde{m}_{[0,k]})\right\rbrace $ $\forall k\geq0$, a circle-criterion observer-based estimator is used to estimate the states of the system. Our objective is to present a design method of a robust circle-criterion observer-based estimator that provides exponential convergence of the estimates $\hat{x}(k)$ to a neighborhood of the true states $x(k)$ under some assumptions, and the error $e(k)=\hat{x}(k)-x(k)$ is ISS with a linear gain and $exp-KL$ function with respect to the measurement noise only, which means there exist $\bar{c}>0$, $\bar{\lambda}\in(0,1)$, $\bar{\gamma}_{y}\geq0$ such that,
\begin{equation}
|e(k)|\leq\bar{c}\bar{\lambda}^{k}|e(0)|+\bar{\gamma}_{y}||\tilde{m}||_{k},
\end{equation}
for all $e(0)\in\mathbb{R}^{n}$, $k\geq 0$.

We now outline our estimation strategy which is inspired by the method for the linear case from \cite{Chong2015} . For (\ref{3}), let $0<q<\frac{p}{2}$ be the largest integer such that for each subset $J\subset\left\lbrace 1,\cdots, p\right\rbrace $ of sensors with $\card(J)\geq p-2q$, the circle-criterion observer of the form
\begin{eqnarray}
\begin{split}
\hat{x}_{J}^{+}=&A\hat{x}_{J}+Gf(H\hat{x}_{J}+K_{J}(\tilde{C}_{J}\hat{x}_{J}-\tilde{y}_{J}))\\
&+L_{J}(\tilde{C}_{J}\hat{x}_{J}-\tilde{y}_{J})+\rho(u,y)\label{88},
\end{split}
\end{eqnarray}
exists for $\tilde{y}_{J}$. $\hat{x}_{J}$ denotes the estimate of the state $x$ from $\tilde{y}_{J}$, and $K_{J}$, $L_{J}$ are the observer gains. $\tilde{C}_{J}$ is the stacking of all $\tilde{C}_{i}$, $i\in J$ where $\tilde{C}_{i}$ is the $ith$ row of $\tilde{C}$. 

When we say the observer exists for $\tilde{y}_{J}$, we mean when $a_{J}(k)=0$ for all $k\geq 0$ the error of each observer $e_{J}(k)=\hat{x}_{J}(k)-x(k)$ with the following dynamics
\begin{equation}
e_{J}^{+}=(A+L_{J}\tilde{C}_{J})e_{J}-L_{J}\tilde{m}_{J}+G\triangle f_{J}\label{89},
\end{equation}
where $\triangle f_{J}=f(\hat{q})-f(\tilde{q}_{J})$, $\tilde{q}=Hx$ and $\hat{q}_{J}=H\hat{x}_{J}+K_{J}(\hat{y}_{J}-\tilde{y}_{J})$ and $\hat{y}_{J}=\tilde{C}_{J}\hat{x}_{J}$, $\tilde{y}_{J}=\tilde{C}_{J}x+\tilde{m}_{J}$
is ISS with a linear gain $\gamma_{J}$ and an $exp-KL$ function with respect to measurement noise $\tilde{m}_{J}$. This implies that there exist $c_{J}>0$, $\lambda_{J}\in(0,1)$, $\gamma_{J}\geq0$ such that 
\begin{equation}
|e_{J}(k)|\leq c_{J}\lambda^{k}_{J}|e_{J}(0)|+\gamma_{J}||\tilde{m}_{J}||_{k},\label{233}
\end{equation}
for all $e_{J}(0)\in\mathbb{R}^{n}$, $k\geq 0$ and $ \tilde{m}_{J}\in\mathbb{R}^{\card(J)}$ with $\left\lbrace \tilde{m}_{J}(k)\right\rbrace \in l_{\infty}$. 

We assume that:
\vspace{2mm}
\begin{assumption} \label{1001} There are at most $q$ sensors attacked,
	\begin{equation}
	\card(W)\leq q.
	\end{equation}
\end{assumption}
\vspace{2mm}
By using the design method proposed in Section \uppercase\expandafter{\romannumeral3}, we construct a robust observer for each subset $J\subset\left\lbrace 1,2,\cdots,p\right\rbrace $ with $\card(J)=p-q$ and for each subset $S\subset \left\lbrace 1,\cdots,p\right\rbrace $ with $\card(S)=p-2q$. For each subset $J$ with $\card(J)=p-q$, we define $\pi_{J}(k)$ for all $k\geq0$ to be the largest deviation between the estimate $\hat{x}_{J}(k)$ and the estimate $\hat{x}_{S}(k)$ that is given by any subset $S\subset J$ with $\card(S)=p-2q$.
\begin{equation} \label{54}
\pi_{J}(k):=\max_{S\subset J:\card(S)=p-2q}|\hat{x}_{J}(k)-\hat{x}_{S}(k)|.
\end{equation}
Recalling that among the total $p$ sensors, there is at least one subset $\bar{I}\subset\left\lbrace 1,\cdots, p\right\rbrace $ of sensors with $\card(\bar{I})=p-q$ that $\tilde{y}_{\bar{I}}=C_{\bar{I}}x+\tilde{m}_{\bar{I}}$ as $a_{\bar{I}}=0$, then in general all of the estimates that appear in the definition of $\pi_{\bar{I}}(k)$ are more consistent than all the subsets $J$ with $\card(J)=p-q$ and $\tilde{y}_{J}=\tilde{C}_{J}x+a_{J}+\tilde{m}_{J}$ with $a_{J}\neq 0$. This motivates the following state estimation: for all $k\geq 0$,
\begin{equation}\label{55}
\sigma(k)=\underset{J\subset\left\lbrace 1,2,\cdots,p\right\rbrace :\card(J)=p-q}{\argmin} \pi_{J}(k),
\end{equation}
and then we say for all $k\geq 0$, the estimate given by the subset $\sigma(k)$ is a good estimate, 
\begin{equation}\label{56}
\hat{x}(k)=\hat{x}_{\sigma(k)}(k),
\end{equation}
where $\hat{x}_{\sigma(k)}(k)$ represents the estimates given by the subset $\sigma(k)$. The following result states that the proposed estimator is robust with respect to sensor attacks and measurement noise. For simplicity, we initialize all the observers to the same condition $\hat{x}(0)$.

\begin{theorem} For the system (\ref{3}), suppose Assumptions \ref{1000}-\ref{1001} hold, recalling from (\ref{54})-(\ref{56}), denoting $e(k)=\hat{x}(k)-x(k)$, there exist positive constants $\bar{c}>0$, $\bar{\lambda}\in(0,1)$, $\bar{\gamma}_{y}\geq0$ such that the following inequality holds:
	\begin{eqnarray}
	|e(k)|\leq\bar{c}\bar{\lambda}^{k}|e(0)|+\bar{\gamma}_{y}||\tilde{m}||_{k} \label{60},
	\end{eqnarray}
	for all $e(0)\in\mathbb{R}^{n}$, $k\geq 0$, and $\tilde{m}\in\mathbb{R}^{p}$ with $\left\lbrace \tilde{m}(k)\right\rbrace \in l_{\infty}$.
\end{theorem}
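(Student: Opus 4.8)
The plan is to combine a short combinatorial observation with the ISS property (\ref{233}) of the attack-free observers. Since $\card(W)\le q$ by Assumption \ref{1001}, the set $\{1,\dots,p\}\setminus W$ has at least $p-q$ elements, so there is a subset $\bar I\subset\{1,\dots,p\}$ with $\card(\bar I)=p-q$ and $a_{\bar I}(k)=0$ for all $k\ge 0$; consequently every observer driven by a subset of $\bar I$ is attack-free and its error obeys (\ref{233}). The second ingredient is that any two subsets of size $p-q$ overlap in at least $p-2q\ge 1$ sensors, which is exactly what the hypothesis $q<p/2$ buys us. Recall also that, since all observers are initialized at the common $\hat{x}(0)$, we have $e_J(0)=\hat{x}(0)-x(0)=e(0)$ for every subset $J$, and $||\tilde{m}_J||_k\le||\tilde{m}||_k$.

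First I would bound the ``clean'' quantity $\pi_{\bar I}(k)$. For every $S\subset\bar I$ with $\card(S)=p-2q$, both $\hat{x}_{\bar I}$ and $\hat{x}_S$ are driven by attack-free sensors, so applying (\ref{233}) to each and using the triangle inequality $|\hat{x}_{\bar I}(k)-\hat{x}_S(k)|\le|e_{\bar I}(k)|+|e_S(k)|$ gives a bound of the form $(c_{\bar I}\lambda_{\bar I}^{k}+c_S\lambda_S^{k})|e(0)|+(\gamma_{\bar I}+\gamma_S)||\tilde{m}||_k$. Taking the maximum over the finitely many such $S$ yields $\pi_{\bar I}(k)\le a\,\lambda_0^{\,k}|e(0)|+b\,||\tilde{m}||_k$ with $a,b\ge 0$ and $\lambda_0\in(0,1)$ independent of $k$.

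Next I would transfer this bound to the estimate actually produced, $\hat{x}_{\sigma(k)}$. By (\ref{55}), $\pi_{\sigma(k)}(k)\le\pi_{\bar I}(k)$. Since $\card\bigl(\sigma(k)\cap\bar I\bigr)\ge\card(\sigma(k))+\card(\bar I)-p=(p-q)+(p-q)-p=p-2q$, I can pick a subset $S^{*}=S^{*}(k)\subseteq\sigma(k)\cap\bar I$ with $\card(S^{*})=p-2q$. Because $S^{*}\subseteq\bar I$, the observer $\hat{x}_{S^{*}}$ is attack-free and obeys (\ref{233}); because $S^{*}\subseteq\sigma(k)$ with $\card(S^{*})=p-2q$, the term $|\hat{x}_{\sigma(k)}(k)-\hat{x}_{S^{*}}(k)|$ is one of those appearing in the maximum (\ref{54}) defining $\pi_{\sigma(k)}(k)$, hence $|\hat{x}_{\sigma(k)}(k)-\hat{x}_{S^{*}}(k)|\le\pi_{\sigma(k)}(k)\le\pi_{\bar I}(k)$. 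The triangle inequality then gives $|e(k)|=|\hat{x}_{\sigma(k)}(k)-x(k)|\le|\hat{x}_{\sigma(k)}(k)-\hat{x}_{S^{*}}(k)|+|e_{S^{*}}(k)|\le\pi_{\bar I}(k)+c_{S^{*}}\lambda_{S^{*}}^{\,k}|e(0)|+\gamma_{S^{*}}||\tilde{m}||_k$. Substituting the bound on $\pi_{\bar I}(k)$, replacing every decay rate $\lambda$ that appears by the largest one over the finitely many admissible subsets, call it $\bar\lambda\in(0,1)$, and collecting the finitely many constants into $\bar c$ and $\bar\gamma_y$, yields (\ref{60}).

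The main obstacle is this last transfer step: one must argue that although the minimizing subset $\sigma(k)$ may itself contain attacked sensors, its defining property forces $\hat{x}_{\sigma(k)}$ to remain within an ISS ball — in the measurement noise only, uniformly in the attack $a$ — of a provably clean estimate $\hat{x}_{S^{*}}$, and such an $S^{*}$ exists precisely because $q<p/2$ guarantees $\card(\sigma(k)\cap\bar I)\ge p-2q>0$. A secondary technical point is that $\sigma(k)$ and $S^{*}(k)$ vary with $k$, so the exponential-decay constants must first be made uniform over the (finitely many) admissible subsets before the estimate can be written in the time-invariant form (\ref{60}); this, together with the common initialization $\hat{x}(0)$, is what lets $|e(0)|$ factor out cleanly on the right-hand side.
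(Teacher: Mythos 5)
Your proposal is correct and follows essentially the same route as the paper's proof: bound $\pi_{\bar I}(k)$ via the ISS estimates of the attack-free observers, use the minimality of $\sigma(k)$ to get $\pi_{\sigma(k)}(k)\le\pi_{\bar I}(k)$, and pass through an attack-free sub-observer of size $p-2q$ inside $\sigma(k)$ with the triangle inequality. Your explicit intersection count $\card(\sigma(k)\cap\bar I)\ge p-2q$ and the remark about uniformizing the constants over the finitely many ($k$-dependent) subsets are slightly more careful statements of steps the paper leaves implicit, but the argument is the same.
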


\begin{proof} From the result of Section \uppercase\expandafter{\romannumeral3}, we know for each subset $J\subset\left\lbrace 1,\cdots, p\right\rbrace $ with $\card(J)\geq p-2q$, the observation error dynamics satisfies (\ref{233}). Since $a_{i}(k)=0$ for all $i\in\left\lbrace 1,\cdots,p\right\rbrace \backslash \supp(a)$ and $\forall k\geq 0$, we conclude for $J=\bar{I}\subseteq\left\lbrace 1,\cdots,p\right\rbrace \backslash \supp(a)$ with $\card(\bar{I})=p-q$, there exist $c_{\bar{I}}>0$, $\lambda_{\bar{I}}\in(0,1) $ and $\gamma_{\bar{I}}\geq0$, such that
	\begin{equation}\label{63}
	|e_{\bar{I}}(k)|\leq c_{\bar{I}}\lambda_{\bar{I}}^{k}|e(0)|+\gamma_{\bar{I}}||\tilde{m}_{\bar{I}}||_{k},
	\end{equation}
	for all $e(0)\in\mathbb{R}^{n}$ and $k\geq0$. Also for any set $S\subset\bar{I}$ with $\card(S)=p-2q$, we have $a_{S}(k)=0$ $\forall k\geq0$, hence there exist $c_{S}>0$, $\lambda_{S}\in(0,1)$ and $\gamma_{S}\geq0$ such that
	\begin{equation}\label{61}
	|e_{S}(k)|\leq c_{S}\lambda_{S}^{k}|e(0)|+\gamma_{S}||\tilde{m}_{S}||_{k},
	\end{equation}
	for all $e(0)\in\mathbb{R}^{n}$ and $k\geq 0$. Recalling the definition of $\pi_{\bar{I}}$ from (\ref{54}), we have that
	\begin{eqnarray}
	\begin{split}
	\pi_{\bar{I}}(k)=&\underset{S\subset\bar{I}}{\max}|\hat{x}_{\bar{I}}(k)-\hat{x}_{S}(k)|\\
	=&\underset{S\subset\bar{I}}{\max}|\hat{x}_{\bar{I}}(k)-x(k)+x(k)-\hat{x}_{S}(k)|\\
	\leq&  |e_{\bar{I}}(k)|+\underset{S\subset\bar{I}}{\max}|e_{S}(k)|
	\end{split}
	\end{eqnarray}
	for all $k\geq 0$. From (\ref{63}) and (\ref{61}), we obtain
	\begin{equation}\label{66}
	\pi_{\bar{I}}(k)\leq 2c'_{\bar{I}}\lambda_{\bar{I}}^{'k}|e(0)|+2\gamma'_{\bar{I}}||\tilde{m}_{\bar{I}}||_{k},
	\end{equation}
	for all $e(0)\in\mathbb{R}^{n}$ and $k\geq 0$, where $c'_{\bar{I}}:=\underset{S\subset\bar{I}}{\max}\left\lbrace c_{\bar{I}}, c_{S}\right\rbrace $, $\lambda'_{\bar{I}}:=\underset{S\subset\bar{I}}{\max}\left\lbrace \lambda_{\bar{I}}, \lambda_{S}\right\rbrace $, and $\gamma'_{\bar{I}}:=\underset{S\subset\bar{I}}{\max}\left\lbrace \gamma_{\bar{I}}, \gamma_{S}\right\rbrace$. Observe that since $S\subset\bar{I}$ with $\card(S)=p-2q$. Recall from (\ref{54})-(\ref{56}) that $\hat{x}(k)=\hat{x}_{\sigma(k)}(k)$ where $\sigma(k)=\underset{J\subset\left\lbrace 1,2,\cdots,p\right\rbrace :\card(J)=p-q}{\argmin} \pi_{J}(k)$, hence $\pi_{\sigma(k)}(k)\leq\pi_{\bar{I}}(k)$. We know that there exist at least one set $\bar{S}\subset\sigma(k)$ with $\card(\bar{S})=p-2q$ such that $a_{\bar{S}}(k)=0$ $\forall k\geq 0$, and there exist $c_{\bar{S}}>0$, $\lambda_{\bar{S}}\in(0,1)$ and $\gamma_{\bar{S}}\geq0$ such that 
	\begin{equation}\label{67}
	|e_{\bar{S}}(k)|\leq c_{\bar{S}}\lambda_{\bar{S}}^{k}|e(0)|+\gamma_{\bar{S}}||\tilde{m}_{\bar{S}}||_{k},
	\end{equation}
	for all $e(0)\in\mathbb{R}^{n}$ and $k\geq0$. From (\ref{54}), there is a fact that $\pi_{\sigma(k)}(k)=\underset{S\subset \sigma(k):\card(S)=p-2q}{\max}|\hat{x}_{\sigma(k)}(k)-\hat{x}_{S}(k)|\geq|\hat{x}_{\sigma(k)}(k)-\hat{x}_{\bar{S}}(k)|$. From the triangle inequality we have that 
	\begin{eqnarray}
	\begin{split}
		|e_{\sigma(k)}(k)|=&|\hat{x}_{\sigma(k)}(k)-x(k)|\\
	=&|\hat{x}_{\sigma(k)}(k)-\hat{x}_{\bar{S}}(k)+\hat{x}_{\bar{S}}(k)-x(k)|\\
	\leq&|\hat{x}_{\sigma(k)}(k)-\hat{x}_{\bar{S}}(k)|+|e_{\bar{S}}(k)|\\
	\leq&\pi_{\sigma(k)}(k)+|e_{\bar{S}}(k)|\\
	\leq&\pi_{\bar{I}}(k)+|e_{\bar{S}}(k)|
	\end{split}
	\end{eqnarray}
	for all $k\geq 0$. From (\ref{66}) and (\ref{67}), we have 
	\begin{eqnarray}\label{70}
	|e_{\sigma(k)}(k)|\leq \bar{c}\bar{\lambda}^{k}|e(0)|+\bar{\gamma}_{y}\cdot \max\left\lbrace ||\tilde{m}_{\bar{S}}||_{k},||\tilde{m}_{\bar{I}}||_{k}\right\rbrace,
	\end{eqnarray}
	for all $e(0)\in\mathbb{R}^{n}$ and $k\geq 0$, where $\bar{c}=3\cdot \max\left\lbrace c_{\bar{S}},c'_{\bar{I}}\right\rbrace $, $\bar{\lambda}=\max\left\lbrace \lambda_{\bar{S}},\lambda'_{\bar{I}}\right\rbrace $, $\bar{\gamma}_{y}=3\cdot \max\left\lbrace \gamma_{\bar{S}}, \gamma'_{\bar{I}}\right\rbrace $. Since $||\tilde{m}||_{k}\geq \max\left\lbrace ||\tilde{m}_{\bar{S}}||_{k},||\tilde{m}_{\bar{I}}||_{k}\right\rbrace$, we can see (\ref{70}) satisfies (\ref{60}). The proof is complete.
\end{proof}


 We still use the model in Example 1, but here we assume sensor attacks and measurement noise both occur to test the performance of our designed estimator.

\textbf{Example 2} Consider the discrete-time nonlinear system subject to measurement noise and sensor attacks:
\begin{eqnarray}
\begin{split}
x^{+}=&\left[ \begin{matrix}\label{e1}
1&\delta\\
0&1
\end{matrix}\right]x+\left[ \begin{matrix}
\frac{1}{2}\delta\alpha \sin (x_{1}+x_{2})\\
\delta\alpha\sin (x_{1}+x_{2})
\end{matrix}\right] 
+\left[ \begin{matrix}
\delta u\\
\delta u
\end{matrix}\right], \\
\tilde{y}=&\left[ \begin{matrix}
3&0.3\\
3&0.6\\
6&0.9\\
1.2&12
\end{matrix}\right] x+a+\tilde{m}.
\end{split}
\end{eqnarray}

We still let $\delta=0.1$ and $\alpha=1$, $\tilde{m}\sim\mathcal{U}(-0.5,0.5)$. We find that the circle-criterion observer of the form (\ref{88}) exists for each subset of $J\subset\left\lbrace 1,2,3,4\right\rbrace $ with $\card(J)\geq 1$ and $p=4$, we have $q=1$. We let $W=\left\lbrace 3\right\rbrace $, which means the $3$-rd sensor is under attack. The estimator knows there is at most one sensor under attack, but does not know which. By using the design method proposed in Section \uppercase\expandafter{\romannumeral3}, we design an observer for each $J\subset\left\lbrace 1,2,3,4\right\rbrace $ with $\card(J)=3$ and each $S\subset\left\lbrace 1,2,3,4\right\rbrace $ with $\card(S)=2$. Therefore, totally  $\binom{4}{3}+\binom{4}{2}=10$ observers are designed, and they are all initialized at $\hat{x}(0)=\left[ 0,0\right]^{\top} $. $x_{1}(0),x_{2}(0)$ are randomly selected from a standard normal distribution. We let $a_{3}\sim\mathcal{U}(-b,b)$ with $b$ given by $1,10$. For all $k\in[0,500]$, (\ref{54})-(\ref{56}) is used to construct $\hat{x}(k)$. The performance of the designed estimator is shown in Figures \ref{fig:3e}-\ref{fig:4e}.
%
\begin{figure}[h]
	\includegraphics[width=0.5\textwidth]{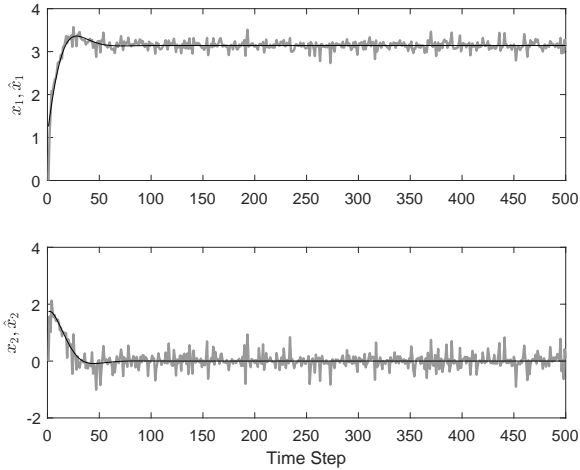}
	\caption{Estimated states $\hat{x}$ converges to a neighbourhood of the true states $x$ when $a_{3}\sim\mathcal{U}(-1,1)$. Legend: $\hat{x}$ (grey), true states (black)}
	\label{fig:3e}
	\centering
\end{figure}
\begin{figure}[h]
	\includegraphics[width=0.5\textwidth]{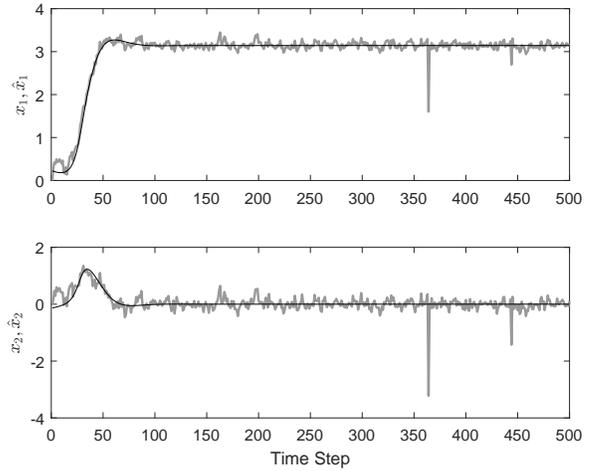}
	\caption{Estimated states $\hat{x}$ converges to a neighbourhood of the true states $x$ when $a_{3}\sim\mathcal{U}(-10,10)$. Legend: $\hat{x}$ (grey), true states (black)}
	\label{fig:4e}
	\centering
\end{figure}
\section{Isolation of sensor attacks}
In this section, we still consider system (\ref{3}). We let $q$ be the largest integer such that a circle-criterion observer exists for each subset $J\subset\left\lbrace 1,\cdots,p\right\rbrace $ with $\card(J)\geq p-2q$. We propose an algorithm for isolating attacked sensors when we know how many sensors are attacked, which is denoted as $q^{\star}$ ($q^{\star}\leq q$).
\vspace{2mm}
\begin{assumption}\label{cc} There are $q^{\star}\leq q$ attacked sensors, i.e.,
	\begin{equation}
		\card(W)=q^{\star},
	\end{equation}
	and $q^{\star}\leq q$ is a known positive integer.
\end{assumption}
\vspace{2mm}
 We construct a circle-criterion observer for each subset $J\subset\left\lbrace 1,\cdots,p\right) $ with $\card(J)=p-q^{\star}$ and for each subset $S\subset\left\lbrace 1,\cdots,p\right\rbrace $ with $\card(S)=p-2q^{\star}$. For each subset $J$ with $\card(J)=p-q^{\star}$ and for all $k\geq0$, we define $\pi_{J}^{\star}(k)$ as
\begin{equation}
	\pi_{J}^{\star}(k):=\max_{S\subset J:\card(S)=p-2q^{\star}}|\hat{x}_{J}(k)-\hat{x}_{S}(k)|.
\end{equation}
 Since there are $q^{\star}$ sensors under attack, we know there is one subset $\bar{I}\subset\left\lbrace 1,\cdots, p\right\rbrace $ of sensors with $\card(\bar{I})=p-q^{\star}$ that $\tilde{y}_{\bar{I}}=\tilde{C}_{\bar{I}}x+\tilde{m}_{\bar{I}}$ as $a_{\bar{I}}=0$, then all of the estimates that appear in the definition of $\pi_{\bar{I}}(k)$ are very likely to be more consistent than all the subsets $J$ with $\card(J)=p-q^{\star}$ and $\tilde{y}_{J}=\tilde{C}_{J}x+a_{J}+\tilde{m}_{J}$ with $a_{J}\neq 0$. For all $k>0$, if we denote $\bar{J}(k)$ as the set of attack-free sensors at time $k$, then $\bar{J}(k)$ is given as
\begin{equation}\label{a}
	\bar{J}(k)=\underset{J\subset\left\lbrace 1,2,\cdots,p\right\rbrace :\card(J)=p-q^{\star}}{\argmin} \pi_{J}(k).
\end{equation}
Then the set $\left\lbrace 1,\cdots,p\right\rbrace \setminus\bar{J}(k)$ is isolated as the set of attacked sensors at time $k$. We make our decision in every $N$ time steps, where $N\in\mathbb{Z}_{>0}$ is the window size we choose, i.e. in each $N$ time steps we keep obtaining $\bar{J}(k)$ from (\ref{a}) for each $k$, and we choose the subset $J(i)$ that is equal to $\bar{J}(k)$ most often in the $i$-th window. Then we claim $\left\lbrace 1,\cdots,p\right\rbrace \setminus J(i)$ is the set of sensors potentially under attack in the $i$-th time window, which is Algorithm \ref{alg:the_alg2}.
 \begin{algorithm}
	\caption{SENSOR ATTACKS ISOLATION \textbf{Input}: $N$, $q^{\star}$ }
	\label{alg:the_alg2}
	\begin{algorithmic}[1]
		\State \text{Design a circle-criterion observer for each subset} \text{$J\subset\left\lbrace 1,\cdots,p\right\rbrace $ with $\card(J)=p-q^{\star}$ and for each} \text{subset $S\subset\left\lbrace 1,\cdots,p\right\rbrace $ with $\card(S)=p-2q^{\star}$ }. 
		\State \text{We intialize the counter variable $n_{J}(i)=0$ for each $J$} \text{with $\card(J)=p-q^{\star}$ and for all $ i\in\mathbb{Z}_{>0}$.}
		\For{each $i\in\mathbb{Z}_{>0}$} \For{each $k\in \left[ 1+(i-1)N,iN\right] $} 
		\State calculate $\pi_{J}^{\star}(k)$ for all $J$ with $\card(J)=p-q^{\star}$ as follows:
		\begin{equation*}
		\pi_{J}^{\star}(k)=\max_{S\subset J:\card(S)=p-2q^{\star}}|\hat{x}_{J}(k)-\hat{x}_{S}(k)|.
		\end{equation*}
		\State \text{Select the subset $\bar{J}(k)$} \text{such that}
		\begin{equation*}
		\bar{J}(k)=\underset{J\subset\left\lbrace 1,2,\cdots,p\right\rbrace :\card(J)=p-q^{\star}}{\argmin} \pi_{J}(k).
		\end{equation*}
		\If{for some $J$ with $\card(J)=p-q^{\star}$ we have $\bar{J}(k)==J$}
		\State update $n_{J}(i)$ as follows:
		\begin{equation*}
		n_{J}(i)=n_{J}(i)+1.
		\end{equation*}
		\EndIf
		\EndFor
		\State \text{Select the subset $J$ that is equal to $\bar{J}(k)$ most often} 
		\begin{equation*}
		J(i)=\underset{J\in\left\lbrace 1,\cdots,p\right\rbrace :\card(J)=p-q^{\star}}{\argmax} n_{J}(i).
		\end{equation*}
		\State \text{The set of sensors potentially under attack is given as:}
		\begin{equation*}
		\tilde{A}(i) = \left\lbrace 1,\cdots,p\right\rbrace \setminus J(i).
		\end{equation*}
		\State \textbf{Return} $\tilde{A}(i)$.
		\EndFor
	\end{algorithmic}
\end{algorithm}

\textbf{Example 3}
We still consider model (\ref{e1}) in Example 2,
with $\delta=0.1$. We consider two cases where $\alpha$ is equal to $1$ and $0$ respectively. In each case, we let $\tilde{m}\sim\mathcal{U}(-0.5,0.5)$, $q^{\star}=1$ and $W=\left\lbrace 3\right\rbrace $. We let $a_{3}\sim\mathcal{U}(-b,b)$, and $b$ given by $1,2.5$. In each case, we choose the window size $N$ to be $50,100,200$ respectively. \\
\textit{Case 1.} $\alpha=1$, we apply Algorithm 1 by running $\binom{4}{2} +\binom{4}{3}=10 $ circle-criterion observers which are all initialized with $\hat{x}(0)=\left[ 0,0\right] ^{\top}$ and $x_{1}(0),x_{2}(0)$ are randomly selected from a standard normal distribution. We follow the steps in Algorithm \ref{alg:the_alg2}. We check in $1000$ time steps which sensor is isolated in each time window, which is shown in Figures \ref{fig:8a}-\ref{fig:9a}. \\
\textit{Case 2.}  $\alpha=0$, (\ref{e1}) becomes a discrete-time linear systems subject to measurement noise and sensor attacks. We construct observers via solving (\ref{68}) by letting $G=0$ and minimizing $\mu+\mu_{1}$. We apply Algorithm \ref{alg:the_alg2} in a similar way as what we do when $\alpha=1$ and Figures \ref{fig:8d}-\ref{fig:9d} show the performance of Algorithm \ref{alg:the_alg2} when $\alpha=0$.

 \begin{figure}[h]
 	\includegraphics[width=0.5\textwidth]{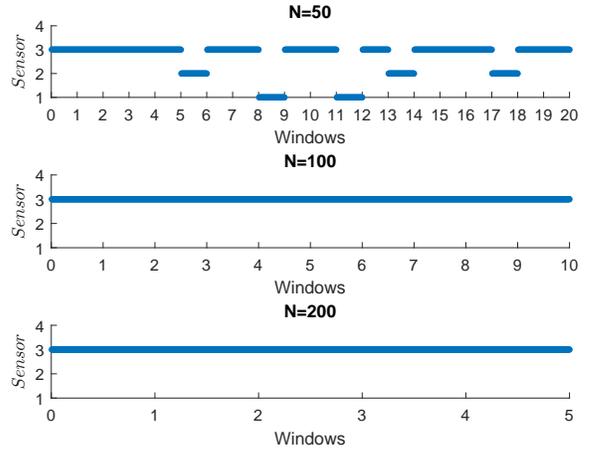}
 	\caption{The sensor isolated by Algorithm \ref{alg:the_alg2}, $\alpha=1$, $a_{3}\sim\mathcal{U}(-1,1)$.}
 	\label{fig:8a}
 	\centering
 \end{figure}
 \begin{figure}[h]
 	\includegraphics[width=0.5\textwidth]{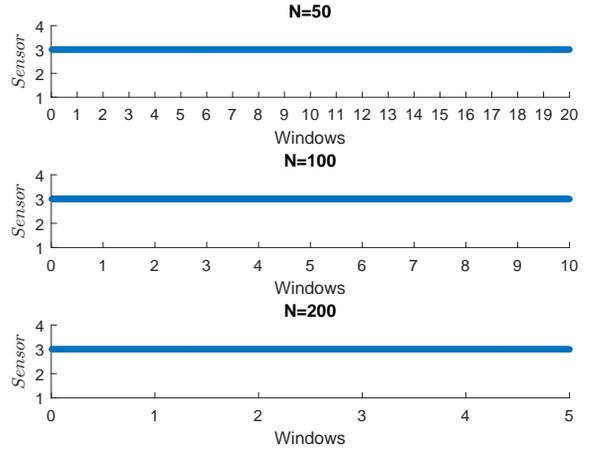}
 	\caption{The sensor isolated  by Algorithm \ref{alg:the_alg2}, $\alpha=1$, $a_{3}\sim\mathcal{U}(-2.5,2.5)$.}
 	\label{fig:9a}
 	\centering
 \end{figure}
The major advantage of Algorithm \ref{alg:the_alg2} is that it can be applied to isolate attacked sensors when sensor attacks and measurement noise both occur as long as measurement noise is bounded.
\begin{figure}[h]
	\includegraphics[width=0.5\textwidth]{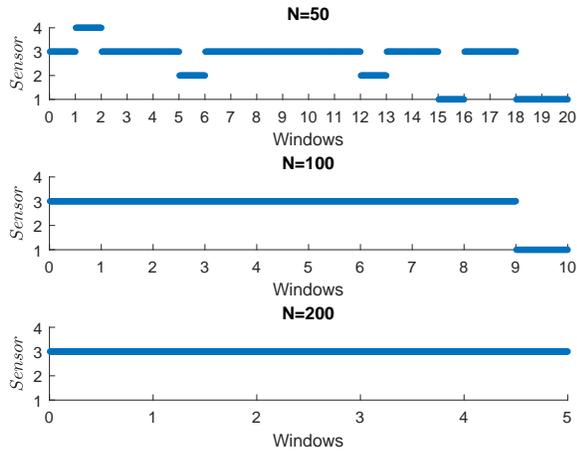}
	\caption{The sensor isolated by Algorithm \ref{alg:the_alg2}, $\alpha=0$, $a_{3}\sim\mathcal{U}(-1,1)$.}
	\label{fig:8d}
	\centering
\end{figure}
\begin{figure}[h]
	\includegraphics[width=0.5\textwidth]{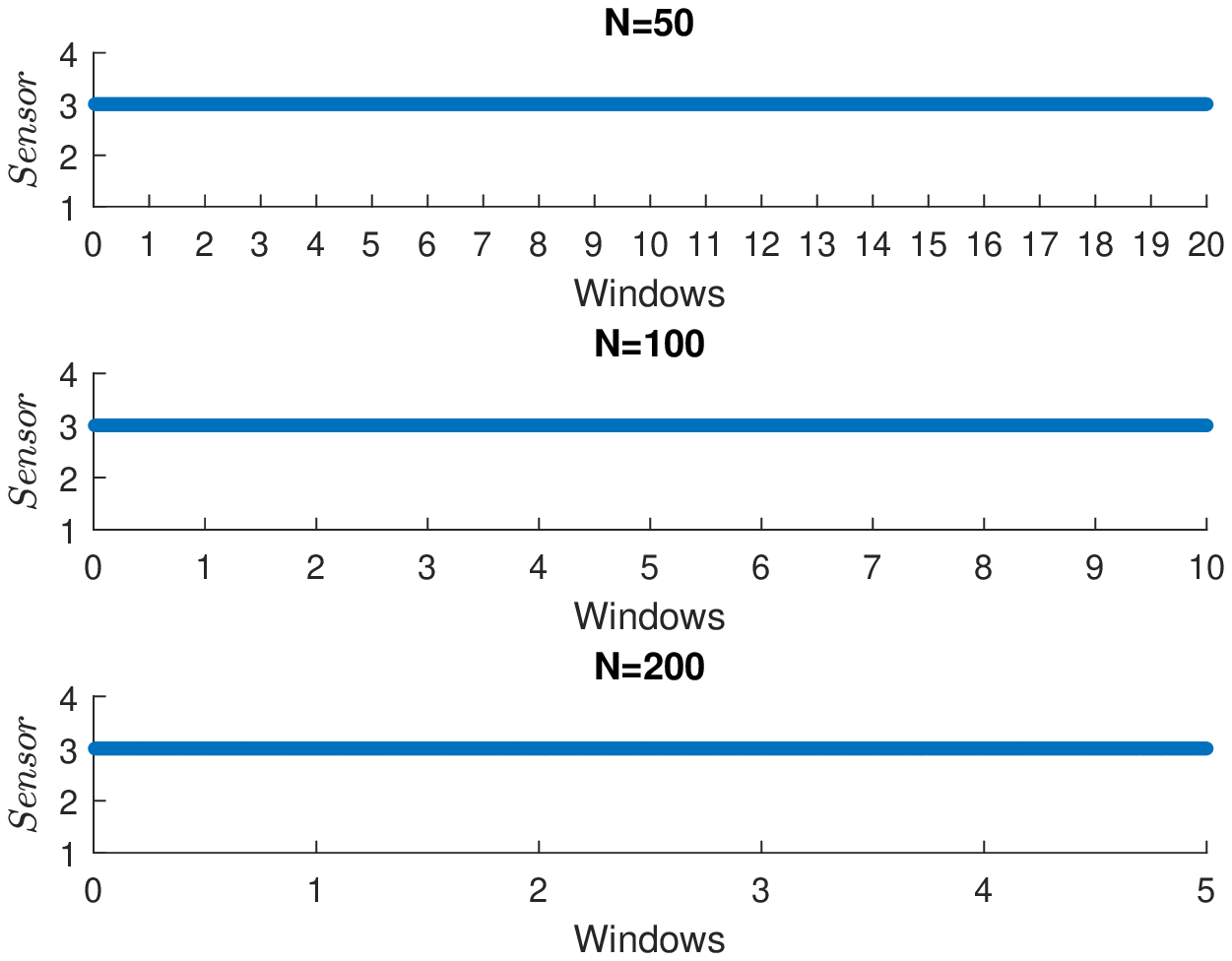}
	\caption{The sensor isolated by Algorithm \ref{alg:the_alg2}, $\alpha=0$, $a_{3}\sim\mathcal{U}(-2.5,2.5)$.}
	\label{fig:9d}
	\centering
\end{figure}
\section{Conclusion}
Following the way of \cite{Sundaram2016}, a design method of a discrete-time circle-criterion observer robust to measurement noise is given as a series of linear matrix inequalities in the absence of attack signals. An less conservative ISS gain is obtained by solving an optimization problem with more degrees of freedom. Then a circle-criterion observer-based estimation strategy is proposed in the presence of measurement noise and sensor attacks. We show that the designed circle-criterion observer-based estimator provides ISS estimation errors with a linear gain and an $exp-KL$ function with respect to measurement noise when a sufficiently small subset of sensors are corrupted by (potentially unbounded) attack signals and all sensors are affected by bounded measurement noise. This work can be seen as an extension of the existing observer-based estimator for linear systems \cite{Chong2015}. An algorithm for isolating attacked sensors is also proposed when we know how many sensors are attacked.
\bibliographystyle{ieeetr}
\bibliography{Observer} 
\end{document}